\newcommand{\hig}[1]{\textcolor{black}{#1}}
\DeclareMathOperator*{\diag}{diag}
\let\Algorithm\algorithm
\renewcommand\algorithm[1][]{\Algorithm[#1]\setstretch{1}}
\tikzstyle{data44}=[rectangle split,rectangle split parts=2,draw,text centered]
\DeclareMathAlphabet{\mathcal}{OMS}{cmsy}{m}{n}
\tikzset{
  BarreStyle/.style =   {opacity=.3,line width=14 mm,color=#1},
  node style ge/.style={},
  node style sp/.style={},
  yl/.style={},
  arrow style mul/.style={},
}
\newtheorem{theorem}{Theorem}
\newtheorem{lemma}{Lemma}
\newtheorem{remark}{Remark}
\newtheorem{definition}{Definition}
\begin{document}

\title{Ensuring Network Connectedness in Optimal Transmission Switching Problems}
\author{Tong Han,
        Yue Song,~\IEEEmembership{Member},
        David J. Hill,~\IEEEmembership{Life Fellow}
 \thanks{T. Han, Y. Song and D. J. Hill are with the Department of Electrical and Electronic Engineering, University of Hong Kong, Hong Kong (e-mail: hantong@eee.hku.hk, yuesong@eee.hku.hk, dhill@eee.hku.hk).}
\vspace{-22pt}
}


\maketitle

\begin{abstract}
  Network connectedness is indispensable for the normal operation of transmission networks. However, there still remains a lack of efficient constraints that can be directly added to the problem formulation of optimal transmission switching (OTS) to ensure network connectedness strictly. To fill this gap, this paper proposes a set of linear connectedness constraints by leveraging the equivalence between network connectedness and feasibility of the vertex potential equation of an electrical flow network. The proposed constraints are $\!$compatible with any existing OTS models to ensure topology connectedness. Furthermore, we develop a reduction version for the proposed connectedness constraints, seeking for improvement of computational efficiency. Finally, numerical studies with a DC OTS model show the deficiency of OTS formulations without full consideration of network connectedness and demonstrate the effectiveness of the proposed constraints. The computational burden caused by the connectedness constraints is moderate and can be remarkably relieved by using the reduced version.
\end{abstract}

\begin{IEEEkeywords}
  power networks, optimal transmission switching, network connectedness, network connectivity
\end{IEEEkeywords}

\IEEEpeerreviewmaketitle

\section{Introduction}

\IEEEPARstart{T}{he} power system is the most important link in the country's energy system, which consists of generation, transmission and distribution systems. Optimal transmission switching (OTS) is the problem to find an optimal generation dispatch and transmission network topology to minimize the dispatch cost \cite{4-62}. Due to decreasing generation-side dispatchablility with growing penetration of variable renewable energy, OTS for leveraging grid-side flexibility is expected to be more widely and actively engaged in future network operations \cite{4-970}.

Network connectedness should be ensured for system normal operations \cite{4-0-1-ref-0}; however, this is not fully considered in most formulations of OTS. Ref. \cite{4-761} pointed out that previous OTS formulations can produce the optimal topology with unallowable islands and thus affecting network reliability. Before the work of \cite{4-761}, OTS formulations only contain necessary connectedness constraints or even no consideration of network connectedness \cite{4-627}. For this, \cite{4-761} designed a branching strategy to preserve network connectedness maximally during the branch-and-bound process. Nevertheless, it is still desirable to \textit{develop efficient constraints which can be directly added to OTS models to prevent islands strictly}. Until now, even though various forms of OTS, such as that considering $N\!-\!$1 security and using AC power flow, have been well developed, this gap is still not filled \cite{4-61, 4-735}. Accordingly, for small-scale networks or some particular cases, the optimal topology obtained by solving these OTS models can be connected, while for large-scale networks and more general cases, connectedness of the optimal topology is very likely to be unmet.

\hig{In this paper, we fill this long-existing gap with a twofold contribution. Firstly, a set of linear constraints is proposed to strictly ensure network connectedness in OTS problems. These constraints are in terms of the vertex potential condition over an auxiliary electrical flow network. The proposed constraints can be directly added to any existing OTS models without changing their properties. Secondly, by exploiting the equivalence between these constraints and partial OTS constraints for certain subgraphs and the fact that not all lines are switchable in OTS, a reduced version of the proposed connectedness constraints seeking for improvement of computational efficiency is also developed.} The effectiveness and computational efficiency of the proposed approach are also demonstrated numerically.

\section{Optimal Transmission Switching}

\subsection{Notation and Model of Optimal Transmission Switching}

We first introduce notations used hereafter and the model of OTS to make the paper self-contained. The transmission network containing $N_n$ nodes and $N_e$ branches (including lines and transformers) is represented as an undirected connected multigraph $\mathcal{G}(\mathcal{V}, \mathcal{E}, r)$. The node set $\mathcal{V} \!\!=\!\! \{i |_{i=1}^{N_n}\}$ and edge set $\mathcal{E} \!\!=\!\! \{k |_{k=1}^{N_e} \}$ correspond to all nodes and all branches, respectively 
\footnote{For brevity, denote $\{\!1,\!2,\!..., n\}$ by $\{\!i|_{i=1}^n\!\}$ and $\{\!x_1,\!x_2,\!..., x_n\!\}$ by $\{\!x_i\!|_{i=1}^n\!\}$. }; and $r:\! \mathcal{E} \!\!\to\!\! \{(i,j) | {i,j} \!\in\! \mathcal{V} \}$ is the function from $\mathcal{E}$ to the set of all pairs of two elements of $\mathcal{V}$ so that $r(k)$ with $k \!\in\! \mathcal{E}$ determines the two nodes linked by branch $k$. In graph notations hereinafter, map $r$ is omitted for brevity. Let $\bm{z} \!\!=\!\! [z_k]_{k \in \mathcal{E}} \!\in\! \mathbb{B}^{N_e} $ be the vector of binary variables to represent status of branches, where ${z}_k \!\!=\!\! 1$ if branch $k$ is switched on and $z_k \!\!=\! 0$ otherwise. We use $\mathcal{G}_{\bm{z}}$ to denote the edge-induced subgraph of $\mathcal{G}$ by edges $\{\!k \!\in\! \mathcal{E} | z_k \!\!=\!\! 1 \}$, which corresponds to the transmission network after line switching assigned by $\bm{z}$. 
With each edge of $\mathcal{G}$ assigned an arbitrary and fixed orientation, denote by $\bm{E}_{\mathcal{G}}$ and $\bm{E}_{\mathcal{G}_{\bm{z}}}$ the oriented incidence matrices of $\mathcal{G}$ and $\mathcal{G}_{\bm{z}}$, respectively. Denote by $\bm{L}_{\mathcal{G}_{\bm{z}}}$ the Laplacian matrix of graph $\mathcal{G}_{\bm{z}}$.

A typical model of OTS with DC power flow is formulated as a mixed-integer linear program (MILP) as follows \cite{4-62, 4-64}:
\vspace{-13pt}
\begin{subequations}\label{eq-0-1-0}
  \begin{align}
    &  \!\!\!\!\!\!\!\!\!\! \min\nolimits_{\bm{z} \in \mathbb{B}^{N_e} ,\bm{p}_g \in \mathbb{R}^{N_n}  }  f(\bm{p}_g, \bm{z})   \label{eq-0-1-0:1}\\ 
     \text{s.t.} ~& \theta_i^{\min} \leq \theta_i \leq {\theta}_i^{\max}, {p}_{g,i}^{\min} \leq p_{g,i} \leq {p}_{g,i}^{\max}   ~~ \forall i \in \mathcal{V} \label{eq-0-1-0:2} \\
     & - {p}_{b,k}^{\max} z_{k} \leq p_{b, k} \leq {p}_{b, k}^{\max} z_{k} ~~ \forall k \in \mathcal{E} \label{eq-0-1-0:5} \\
     & b_k (\theta_i \!-\! \theta_j) \!-\! p_{b,k} \!\!+\!\! (1 \!-\! z_k) K \!\!\geq\!\! 0 ~~ \forall k \!\in\! \mathcal{E}, (i,j) \!\!=\!\! r(k)  \label{eq-0-1-0:6} \\
     & b_k (\theta_i \!-\! \theta_j) \!-\! p_{b,k} \!\!-\!\! (1 \!-\! z_k) K \!\!\leq\!\! 0 ~~ \forall k \!\in\! \mathcal{E}, (i,j) \!\!=\!\! r(k) \label{eq-0-1-0:7} \\
     & p_{g, i} \!-\! p_{d, i} \!-\!\!\! \textstyle \sum\nolimits_{k \in \mathcal{A}_{\!f}\!(i)  } \!p_{b, k} \!\!+\!\! \textstyle  \sum\nolimits_{k \in \mathcal{A}_t(i)  } \!p_{b, k} \!\!=\!\! 0 ~~ \forall i \!\in\! \mathcal{V} \label{eq-0-1-0:4} \\
     & z_k = 1 ~~ \forall k \in \mathcal{E}/\mathcal{E}_{s} \label{eq-0-1-0:8}  
  \end{align}
\end{subequations}
where $\bm{p}_g \!\!=\!\! [p_{g,i}]_{i \in \mathcal{V}}$ with $p_{g,i}$ being the real power generation at node $i$; $\theta_i$ is the phase angle of voltage at node $i$, with $\theta_i^{\min}$ and $\theta_i^{\max}$ being it lower and upper bounds, respectively; $p_{g,i}^{\min}$ and $p_{g,i}^{\max}$ are lower and upper bounds of real power generation at node $i$, respectively; $p_{b,k}$ is the real power flow from node $i$ to $j$ through branch $k$, where $(i,j) \!\!=\!\! r(\!k\!)$; $p_{d,i}$ denotes the real power consumed at node $i$, which are given by load forecast; $\mathcal{A}_f(i) \!\!\subset\!\! \mathcal{E} $ and $\mathcal{A}_t(i) \!\!\subset\!\! \mathcal{E} $ are sets of all branches starting and ending at node $i$, respectively; $p_{b,k}^{\max}$ is the maximal capacity of branch $k$; $b_k$ denotes the susceptance of branch $k$; $K$ is a sufficiently large positive number; and $\mathcal{E}_s$ is the set of lines allowed to be switched, and then $\mathcal{E}_u \!\!=\!\! \mathcal{E}/\mathcal{E}_{\!s}$ is the set of unswitchable branches including transformers and lines that do not participate in OTS. Objective function $f(\bm{p}_g, \bm{z})$ denotes the dispatch cost, which is generally formulated as the total generation cost, i.e., $\sum\nolimits_{i \in \mathcal{V}} \! c_{g,i} p_{g,i} $ with $c_{g,i}$ being the cost of power generation at node $i$, and the dispatch cost for branches, i.e, $\sum_{k\in \mathcal{E}_s} \!\! c_{b, k} z_k$ with $c_{b, k}$ being the transmission cost of branch $k$, can also be included \cite{4-63}; (\ref{eq-0-1-0:2}), (\ref{eq-0-1-0:5}) and (\ref{eq-0-1-0:8}) are operational constraints; and (\ref{eq-0-1-0:6})-(\ref{eq-0-1-0:4}) are DC power flow constraints. \hig{More practically, $N\!-\!1$ security constraints can be added to (\ref{eq-0-1-0}), which gives the $N\!-\!1$ security-constrained DC OTS model \cite{4-627, 4-63}. Since this work focuses on the connectedness issue in OTS, the following $N\!-\!1$ security constraints considering only branch contingencies are adopted:}
\begin{equation}\label{eq-0-0-add-1}
    \hig{ 
        (\bm{z} \circ \bm{s}^{\kappa}, \bm{p}_g^{\kappa}) \in \mathbb{F}^{\kappa}, \bm{r}_-  \leq \bm{p}_g^{\kappa} - \bm{p}_g \leq \bm{r}_+  ~~ \forall \kappa \in \mathcal{C}
    }
\end{equation}
\hig{where $\mathcal{C} \!\subseteq\! \mathcal{E}$ is the branch contingency set, $\kappa$ denotes the fault branch, $\bm{s}^{\kappa} \!=\! [s^{\kappa}_k]_{k \in \mathcal{E}}$ with $s^{\kappa}_k \!=\! 0$ if $k=\kappa$ and $s^{\kappa}_k \!=\! 1$ otherwise is the vector to parameterize contingency $\kappa$, $\bm{z} \circ \bm{s}^{\kappa}$ represents the post-contingency status of branches, $\bm{p}_g^{\kappa}$ is the vector of real power generation after contingency $\kappa$, the first constraint in (\ref{eq-0-0-add-1}) represents the post-contingency operational feasibility with $\mathbb{F}^{\kappa}$ being analogous to the feasible space of $(\bm{z}, \bm{p}_g)$ defined by (\ref{eq-0-1-0:2}) to (\ref{eq-0-1-0:4}), the second one in (\ref{eq-0-0-add-1}) represents ramping constraints of generators with $\bm{r}_+$ and $\bm{r}_-$ being vectors of upward and downward ramp rate, respectively.} It is noted that DC power flow here can be replaced by accurate AC power flow. We use DC OTS for illustration since accuracy of OTS models is not the focus of this paper while the proposed approach is also applicable to AC OTS.

\subsection{Connectedness Guarantee by (1) and Necessary Conditions}

Connectedness of graph $\mathcal{G}_{\bm{z}}\!$ is indispensable for the system normal operation, which, however, cannot be always ensured by model (\ref{eq-0-1-0}). An obvious example where model (\ref{eq-0-1-0}) potentially fails to ensure network connectedness of $\mathcal{G}_{\bm{z}}\!$ is illustrated in Fig. \ref{fig-0-1-3}. 
Solution \textit{A} represents a feasible solution of (\ref{eq-0-1-0}) with the minimum total generation cost, where the two green lines form a cutset of graph $\mathcal{G}_{\bm{z}}\!$ that corresponds to a partition of the power network into Area 1 and Area 2. Total real power generation of Area 1 and Area 2 are $p_{g, a1}$ and $p_{g, a2}$, respectively; and real power of $p_{b,c}$ is transferred from Area 1 to Area 2 through the cutset. Now we consider a power generation change where the real power generation of Area 1 decreases to $p_{g, a1} \!-\! p_{b,c}$ and that of Area 2 increases to $p_{g, a2} \!+\! p_{b,c}$. Then provided that operational constraints (\ref{eq-0-1-0:2}) and (\ref{eq-0-1-0:5}) are still satisfied after the power generation change and the decrease in generation cost of Area 1 equals to the increase in generation cost in Area 2, solution \textit{B} that contains two islands is also a feasible solution of (\ref{eq-0-1-0}) with the minimum total generation cost. 
Further consider two cases of the objective function in (\ref{eq-0-1-0}). 
When $f \!\!=\!\! \sum\nolimits_{i \in \mathcal{V}} c_{g,i} p_{g,i}$, the two solutions in Fig. \ref{fig-0-1-3} are both optimal solutions of (\ref{eq-0-1-0}). Solving (\ref{eq-0-1-0}) yields either the connected solution \textit{A} or unconnected solution \textit{B}. 
When $f \!\!=\!\! \sum\nolimits_{i \in \mathcal{V}} c_{g,i} p_{g,i} \!+\!\! \sum_{k\in \mathcal{E}_s} \!c_{b, k} z_k$, solving (\ref{eq-0-1-0}) will certainly return the unconnected solution as long as $c_{b, k} > 0$ for the two lines in the cutset. Furthermore, even if the generation cost of solution \textit{B} is higher than that of solution \textit{A}, solving (\ref{eq-0-1-0}) will also give the unconnected solution if the difference between generation cost of the two solutions is smaller than the sum of $c_{b, k}$ of the two lines in the cutset.

\begin{figure}[h]
	\centering
	\includegraphics[width=7.5cm]{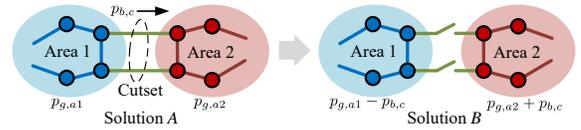}
	\caption{\hig{Illustration of an islanded optimal topology obtained from (\ref{eq-0-1-0}).}}
	\label{fig-0-1-3}
\end{figure}

In fact, when $f$ is the total generation cost and optimization model (\ref{eq-0-1-0}) has a unique optimal solution of $\bm{z}$, there is no guarantee that the unique optimal topology has to be connected. A necessary connectedness condition used in \cite{4-627} is 
\begin{equation}\label{eq-0-1-nc}
  \textstyle  \sum\nolimits_{ k \in \mathcal{A}_f(i) \cup \mathcal{A}_t(i) } z_k \geq 1 ~~ \forall i \in \mathcal{V} 
\end{equation}
which ensures that each node is linked by at least one branch. Considering the example in Fig. \ref{fig-0-1-3} again, constraints (\ref{eq-0-1-nc}) can only prevent islands when Area 1 or Area 2 contains one node.

Although the existing OTS models contain no particular constraints to guarantee network connectedness or only use necessary conditions, the obtained optimal topology can still be connected in most cases. This phenomenon can be explained considering the following points:
\begin{itemize}
  \item[\textit{(\romannumeral1)}] The number of lines being switching off is bounded by a relatively small number.
  \item[\textit{(\romannumeral2)}] The test power systems are highly meshed.
  \item[\textit{(\romannumeral3)}] The set of lines allowed to be switched is restricted.
  \item[\textit{(\romannumeral4)}] The $N\!-\!1$ security constraints are added to the OTS model.
  \item[\textit{(\romannumeral5)}] The numerical tests are based on system data at limited sample points.
\end{itemize}
The above points all can reduce the probability that $\mathcal{G}_{\bm{z}}$ with islands is feasible for the OTS model. We also use the example in Fig. \ref{fig-0-1-3} to explain this. It is trivial that points (\textit{\romannumeral1}) to (\textit{\romannumeral3}) can prevent producing solution \textit{B}. 
For point \textit{(\romannumeral4)}, if values of some electrical variables in solution \textit{B} are close to their limits, then a random line failure will very likely cause violations of operational constraints, namely that $N\!-\!1$ security constraints are not satisfied. Hence solving (\ref{eq-0-1-0}) will not produce solution \textit{B} in this case. For point \textit{(\romannumeral5)}, since the optimal topology from (\ref{eq-0-1-0}) depends on specific operating conditions, loads and generator costs, numerical tests in most existing works, which utilize IEEE standard system data with only one sample point or consider several load conditions, can miss the unconnected optimal topologies.

However, points \textit{(\romannumeral1)} and \textit{(\romannumeral3)} may increase system dispatch cost of the optimal solution. According to the computation results in \cite{4-62}, the best solution found for the IEEE 118-bus system opens 38 lines. Restrictions in \textit{(\romannumeral1)} and \textit{(\romannumeral3)} can prevent opening all these lines easily. Regarding point \textit{(\romannumeral2)}, although modern transmission networks are general highly meshed overall, they also contain some loosely-connected portions \cite{4-1125}. In fact, some transmission networks such as the south east Australian transmission network, are even loosely-connected on the whole \cite{4-118}. \hig{For the $N\!-\!1$ security constraints in \textit{(\romannumeral4)}, they may prevent islands in the optimal topology but connectedness of post-contingency graph $\mathcal{G}_{\bm{z} \circ \bm{s}^{\kappa}  }$ should be ensured additionally.}
As for point \textit{(\romannumeral5)}, for transmission networks with high penetration of renewable energy sources and in a market environment, not only power loads but also some power generation bounds and cost of power generation in the OTS model may vary greatly. Thus solving (\ref{eq-0-1-0}) without strict network connectedness constraints will produce unconnected optimal topologies in the long term almost certainly.

\section{Approach to Ensure Network Connectedness}

\subsection{Connectedness Constraints}

The following developes a set of constraints which can be directly added to OTS models to ensure network connectedness strictly. The general idea is to construct an auxiliary electrical flow network with the same topology as $\mathcal{G}_{\bm{z}}$ and existence of its electrical flow is guaranteed iff $\mathcal{G}_{\bm{z}}$ is connected. In this way, determination of connectedness of $\mathcal{G}_{\bm{z}}$ can be converted to that of feasibility of the vertex potential equation of the auxiliary electrical flow network, \hig{and the latter condition can be embedded into OTS models much more easily.}

We first introduce an electrical flow network (allow multiple edges between two vertices) with the same topology as $\mathcal{G}_{\bm{z}}$, unit resistance for each edge, vertex potential denoted as $\bm{\vartheta} \!\in\! \mathbb{R}^{\! N_n }$, and vertex electrical flow injection given by $\bm{c} \!\in\! \mathbb{R}^{\!N_n}$ \cite{4-p-0-1-1}. 
The Laplacian matrix links $\bm{\vartheta}$ and $\bm{c}$ as $\bm{L}_{\mathcal{G}_{\bm{z}}} \bm{\vartheta} = \bm{c} $. 
Suppose that $\mathcal{G}$ has $N_{s}$ connected node-induced subgraphs (NISs) and their node sets are collected by $\{ \mathcal{V}_i |_{i = 1}^{N_{s}} \}$. Here a node-induce subgraph refers to an arbitrary nonempty subset of the nodes of graph $\mathcal{G}$ together with all of the edges whose endpoints are both in this subset. In such manner, $\{ \mathcal{V}_i |_{i = 1}^{N_{s}} \}$ includes node sets of connected components of all possible $\mathcal{G}_{\bm{z}}$. Without loss of generality, it is assumed that $\mathcal{V}_1 \!\!=\!\! \mathcal{V}$ is the node set of the NIS equal to $\mathcal{G}$. Let $\bm{J} \!\in\! \mathbb{R}^{ N_s \times N_n } $ be the constant matrix satisfying $\forall  i \!\in\! \{\! i' |_{i' = 1}^{N_{s}} \!\}$, $\bm{J}_{ij} \!\!=\!\! 1$ if $j \!\!\in\!\! \mathcal{V}_j$ and $\bm{J}_{ij} \!=\! 0$ otherwise. Fig. \ref{fig-0-1-2} illustrates set $\{\! \mathcal{V}_i |_{i = 1}^{N_{s}} \!\}$ and the corresponding matrix $\bm{J}$ for a 4-node graph. With the matrix $\bm{J}\!$, some balance properties of $\bm{c}$ are defined by Definition \ref{cond-0-1-1}.
 
\begin{figure}[h]
	\centering
	\includegraphics[width=7cm]{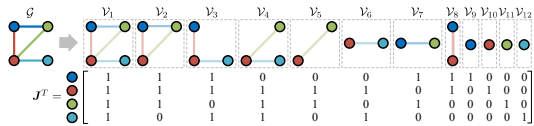}
	\caption{Illustration of set $\{ \mathcal{V}_i |_{i = 1}^{N_{s}} \}$ and matrix $\bm{J}$.}
	\label{fig-0-1-2}
\end{figure} 

\begin{definition}[Unique balance, unbalance]\label{cond-0-1-1}
  Let $\bm{b} \!\!=\!\! \bm{J} \bm{c} \!\in\! \mathbb{R}^{N_s} $. Then $\bm{c}$ is multiply-balanced if $\forall i \!\in\! \{i' |_{i'=1}^{N_m} \}$, $\bm{b}_i \!=\! 0$ and $\forall i \!\in\! \{i' |_{i'=N_m + 1}^{N_s} \}$, $\bm{b}_i \!\!\neq\!\! 0$, \hig{with $1 \!\!\leq\!\! N_m \!\!\leq\!\! N_s$.} If $N_m \!\!=\!\! 1$, $\bm{c}$ is uniquely-balanced. Moreover, $\bm{c}$ is unbalanced if $\forall i \!\in\! \{i' |_{i'=1}^{N_s} \}$, $\bm{b}_i \!\neq\! 0$.
\end{definition}

\begin{lemma}\label{lem-0-1-1}
  Given any uniquely-balanced $\bm{c}$, solutions of the vertex potential equation of the electrical flow network, i.e., $\bm{L}_{\mathcal{G}_{\bm{z}}} \bm{\vartheta} \!=\! \bm{c} $, exist iff graph $\mathcal{G}_{\bm{z}}$ is connected.
\end{lemma}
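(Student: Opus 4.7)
The plan is to combine two classical facts about graph Laplacians with the indexing convention built into Definition \ref{cond-0-1-1}. First, the kernel of $\bm{L}_{\mathcal{G}_{\bm{z}}}$ is spanned by the indicator vectors of the connected components of $\mathcal{G}_{\bm{z}}$. Second, since $\bm{L}_{\mathcal{G}_{\bm{z}}}$ is symmetric, its range equals the orthogonal complement of its kernel, so $\bm{L}_{\mathcal{G}_{\bm{z}}} \bm{\vartheta} = \bm{c}$ is solvable iff $\bm{1}_S^{\top} \bm{c} = 0$ for every connected component $S$ of $\mathcal{G}_{\bm{z}}$, where $\bm{1}_S$ is the $\{0,1\}$-indicator vector of $S$.

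For the ``if'' direction, I would simply observe that when $\mathcal{G}_{\bm{z}}$ is connected the only component is $\mathcal{V}$ itself, so the solvability condition collapses to $\bm{1}^{\top} \bm{c} = 0$. Because $\mathcal{V}_1 = \mathcal{V}$ by convention, this is exactly $\bm{b}_1 = 0$, which holds under the unique-balance hypothesis ($N_m = 1$ forces $\bm{b}_1 = 0$).

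For the ``only if'' direction, I would argue by contrapositive: assume $\mathcal{G}_{\bm{z}}$ is disconnected, with components whose node sets are $\mathcal{V}^{(1)}, \ldots, \mathcal{V}^{(k)}$ for some $k \geq 2$, and show solvability fails. The step that needs care is identifying each $\mathcal{V}^{(j)}$ with some element of the master NIS list $\{\mathcal{V}_i\}_{i=1}^{N_s}$. This holds because the subgraph of $\mathcal{G}$ induced on $\mathcal{V}^{(j)}$ is a supergraph of the connected subgraph of $\mathcal{G}_{\bm{z}}$ induced on $\mathcal{V}^{(j)}$, hence is itself connected, so $\mathcal{V}^{(j)} = \mathcal{V}_{i_j}$ for some index $i_j$; moreover $i_j \neq 1$ because $\mathcal{V}^{(j)} \subsetneq \mathcal{V}$. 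If a solution existed, the range condition above would force $\bm{b}_{i_j} = \bm{1}_{\mathcal{V}^{(j)}}^{\top} \bm{c} = 0$ for each $j = 1,\ldots,k$, producing at least $k \geq 2$ indices (namely $i_1$ and $i_2$, both different from $1$) violating the condition $\bm{b}_i \neq 0$ for $i \geq 2$ that defines unique balance. This contradicts the hypothesis, completing the argument.

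The main obstacle, and the place where unique balance is doing real work, is precisely the bookkeeping step linking components of $\mathcal{G}_{\bm{z}}$ to entries of the preset NIS list: one needs to know that \emph{every} possible disconnection pattern of $\mathcal{G}_{\bm{z}}$ shows up as a balanced index beyond $i=1$, which is guaranteed because the NIS list was defined to contain node sets of connected components of all possible $\mathcal{G}_{\bm{z}}$. Once that is in place, the remainder of the proof is a routine range/kernel calculation for a symmetric Laplacian.
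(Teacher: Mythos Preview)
Your proposal is correct. Both your argument and the paper's rest on the same underlying fact---that $\bm{L}_{\mathcal{G}_{\bm{z}}}\bm{\vartheta}=\bm{c}$ is solvable iff $\bm{c}$ is orthogonal to the indicator vector of every connected component---but the routes to that fact differ slightly. The paper invokes the pseudoinverse projection criterion $\bm{L}_{\mathcal{G}_{\bm{z}}}\bm{L}_{\mathcal{G}_{\bm{z}}}^{+}\bm{c}=\bm{c}$ and, for the connected case, cites an explicit entrywise formula for $\bm{L}_{\mathcal{G}_{\bm{z}}}\bm{L}_{\mathcal{G}_{\bm{z}}}^{+}$ from an external lemma to verify it; for the disconnected case it block-diagonalises and repeats the calculation on each block. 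You instead go straight to ``range $=$ (kernel)$^{\perp}$'' for a symmetric matrix and identify the kernel with the span of component indicators, which is shorter and avoids the external citation. Your bookkeeping step---showing each component of $\mathcal{G}_{\bm{z}}$ induces a connected NIS of $\mathcal{G}$ and hence appears in the list $\{\mathcal{V}_i\}_{i\ge 2}$---is exactly the place where unique balance bites, and you have handled it correctly; the paper makes the same use of unique balance but leaves that identification implicit.
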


\begin{proof}
  Sufficiency. Suppose graph $\mathcal{G}_{\bm{z}}$ is connected. A necessary and sufficient condition for any solution(s) of $\bm{L}_{\mathcal{G}_{\bm{z}}} \bm{\vartheta} \!=\! \bm{c} $ to exist is that $\bm{W} \bm{c} = \bm{c}$ where $\bm{W} = \bm{L}_{\mathcal{G}_{\bm{z}}} \bm{L}_{\mathcal{G}_{\bm{z}}}^+$ with $\bm{L}_{\mathcal{G}_{\bm{z}}}^+$ being the Moore-Penrose pseudoinverse of $\bm{L}_{\mathcal{G}_{\bm{z}}}$. 

  Since $\mathcal{G}_{\bm{z}}$ is connected, by \cite[Lemma~3]{4-969}, $\bm{W}$ is given as $\bm{W}_{ij} = - \frac{1}{ N_n }$ for $i \neq j$ and $\bm{W}_{ij} =  \sum_{ j'=1, j' \neq j }^{N_n} \bm{W}_{ij'}  = \frac{N_n - 1}{ N_n }$ for $i = j$. Recall that $\bm{c}$ is uniquely-balanced, thus we have $\bm{1}_{N_n}^T \bm{c} = 0$. Therefore, $\forall i \in \{ i'|_{i'=1}^{N_n} \} $, $ [\bm{W} \bm{c}]_i \!=\! \sum_{j=1}^{N_n}  \bm{W}_{ij} \bm{c}_j \!=\! \bm{W}_{ii} \bm{c}_i \!+\! \sum_{j=1, j\neq i}^{N_n} \bm{W}_{ij} \bm{c}_j \!=\!  \frac{N_n - 1}{ N_n } \bm{c}_i \!+\! \frac{1}{ N_n } \bm{c}_i \!=\! \bm{c}_i  $, i.e., $\bm{W} \bm{c} \!=\! \bm{c}$. Thus solutions of $\bm{L}_{\mathcal{G}_{\bm{z}}} \bm{\vartheta} \!=\! \bm{c} $ exist.

  Necessity. Suppose that graph $\mathcal{G}_{\bm{z}}$ is disconnected and with $N_d$ connected components of size $n_k$, $\forall k \in \{ k'|_{k'=1}^{N_d} \}$. By rearranging nodes of each connected component together, we have $\bm{L}_{\mathcal{G}_{\bm{z}}} = \diag\{ \bm{L}_{\mathcal{G}_{\bm{z}}}^k |_{k=1}^{N_d} \} $ with $\bm{L}_{\mathcal{G}_{\bm{z}}}^k$ being the Laplacian matrix of the $k$-th connected component of $\mathcal{G}_{\bm{z}}$. Let $\bm{\vartheta}^k$ and $\bm{c}^k$ be subvectors of $\bm{\vartheta}$ and $\bm{c}$ corresponding to the $k$-th component, respectively; and $\bm{W}^k \!=\! \bm{L}_{\mathcal{G}_{\bm{z}}}^k\bm{L}_{\mathcal{G}_{\bm{z}}}^{k +} $. Consider existence of solutions of $\bm{L}_{\mathcal{G}_{\bm{z}}}^k \bm{\vartheta}^k \!=\! \bm{c}^k$. Following the proof of sufficiency, but since $\bm{1}_{n_k}^T \bm{c}^k \neq 0 $ by its unique balance, it is trivial that $\bm{W}^k \bm{c}^k \neq\bm{c}^k$. Thus $\forall k \!\in\! \{ k'|_{k'=1}^{N_d} \} $, solutions of $\bm{L}_{\mathcal{G}_{\bm{z}}}^k \bm{\vartheta}^k \!=\! \bm{c}^k$ do not exist and the same for $\bm{L}_{\mathcal{G}_{\bm{z}}} \bm{\vartheta} \!=\! \bm{c} $. 
\end{proof}

\begin{theorem}\label{theo-0-1-1}
  Graph $\mathcal{G}_{\bm{z}}$ is connected iff the following set of constraints is feasible: 
  \begin{subequations}\label{eq-0-1-1} 
    \begin{align}
      & - M (\bm{1}_{N_e} - \bm{z} ) \leq \bm{E}_{\mathcal{G}}^T \bm{\vartheta} - \bm{\rho}  \leq M (\bm{1}_{N_e} - \bm{z} )  \label{eq-0-1-1:0} \\  
      & - M \bm{z} \leq \bm{\rho} \leq M \bm{z}   \label{eq-0-1-1:2} \\ 
      & \bm{E}_{\mathcal{G}} \bm{\rho} = \bm{c}  \label{eq-0-1-1:3}
    \end{align}
  \end{subequations}
  where $\bm{\rho} \in \mathbb{R}^{ N_e }$ are auxiliary variables, $M$ is a sufficiently large positive number, and $\bm{c}$ is any uniquely-balanced one.
\end{theorem}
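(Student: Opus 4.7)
The plan is to reduce Theorem \ref{theo-0-1-1} to Lemma \ref{lem-0-1-1} by showing that the constraint set (\ref{eq-0-1-1}) is exactly a big-$M$ encoding of the statement ``the vertex potential equation $\bm{L}_{\mathcal{G}_{\bm{z}}} \bm{\vartheta} = \bm{c}$ is solvable.'' The algebraic engine of this reduction is the identity $\bm{L}_{\mathcal{G}_{\bm{z}}} = \bm{E}_{\mathcal{G}} \diag(\bm{z}) \bm{E}_{\mathcal{G}}^T$, which expresses the Laplacian of the switched subgraph in terms of the (fixed) incidence matrix of $\mathcal{G}$ and the switching vector; this follows because $\diag(\bm{z})$ simply selects the columns of $\bm{E}_{\mathcal{G}}$ indexed by on-branches.

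First I would carry out a case analysis on the big-$M$ constraints, edge by edge. When $z_k = 1$, constraint (\ref{eq-0-1-1:0}) collapses to the equality $\rho_k = [\bm{E}_{\mathcal{G}}^T \bm{\vartheta}]_k$ while (\ref{eq-0-1-1:2}) becomes the non-restrictive $|\rho_k| \leq M$; when $z_k = 0$, the roles flip and (\ref{eq-0-1-1:2}) forces $\rho_k = 0$ while (\ref{eq-0-1-1:0}) loosens to $|[\bm{E}_{\mathcal{G}}^T \bm{\vartheta}]_k - \rho_k| \leq M$. Combining the two cases gives $\bm{\rho} = \diag(\bm{z}) \bm{E}_{\mathcal{G}}^T \bm{\vartheta}$ whenever (\ref{eq-0-1-1:0}) and (\ref{eq-0-1-1:2}) hold, so (\ref{eq-0-1-1:3}) rewrites as $\bm{E}_{\mathcal{G}} \diag(\bm{z}) \bm{E}_{\mathcal{G}}^T \bm{\vartheta} = \bm{L}_{\mathcal{G}_{\bm{z}}} \bm{\vartheta} = \bm{c}$.

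With this equivalence in hand, both directions are short. For necessity (feasibility $\Rightarrow$ connectedness), any feasible $(\bm{\vartheta}, \bm{\rho})$ produces a $\bm{\vartheta}$ solving $\bm{L}_{\mathcal{G}_{\bm{z}}} \bm{\vartheta} = \bm{c}$, and since $\bm{c}$ is uniquely-balanced, Lemma \ref{lem-0-1-1} immediately yields connectedness of $\mathcal{G}_{\bm{z}}$. For sufficiency (connectedness $\Rightarrow$ feasibility), Lemma \ref{lem-0-1-1} provides some $\bm{\vartheta}^*$ (e.g.\ the minimum-norm solution $\bm{L}_{\mathcal{G}_{\bm{z}}}^+ \bm{c}$) with $\bm{L}_{\mathcal{G}_{\bm{z}}} \bm{\vartheta}^* = \bm{c}$, and setting $\bm{\rho}^* = \diag(\bm{z}) \bm{E}_{\mathcal{G}}^T \bm{\vartheta}^*$ gives an explicit witness satisfying (\ref{eq-0-1-1:0})--(\ref{eq-0-1-1:3}).

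The main obstacle I anticipate is book-keeping around the big-$M$ constant: the theorem asserts ``$M$ sufficiently large'' without making the bound explicit, so I need to argue that such an $M$ can in fact be chosen once and for all. Because $\bm{\vartheta}$ is determined only up to an additive constant and is otherwise unbounded, the witness in the sufficiency direction must be a canonical choice; the pseudoinverse solution works here. Since there are only finitely many $\bm{z} \in \mathbb{B}^{N_e}$ yielding a connected $\mathcal{G}_{\bm{z}}$, the maximum of $\|\bm{E}_{\mathcal{G}}^T \bm{L}_{\mathcal{G}_{\bm{z}}}^+ \bm{c}\|_\infty$ over these cases is finite, and any $M$ exceeding this maximum makes the big-$M$ constraints faithful in both directions simultaneously.
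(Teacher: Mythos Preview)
Your proposal is correct and follows essentially the same approach as the paper: both reduce to Lemma~\ref{lem-0-1-1} by showing that (\ref{eq-0-1-1}) is a big-$M$ linearization of $\bm{L}_{\mathcal{G}_{\bm{z}}}\bm{\vartheta}=\bm{c}$, with the paper phrasing your edge-by-edge case analysis as ``exact McCormick envelopes'' of the bilinear relations $\bm{z}\circ(\bm{E}_{\mathcal{G}}^T\bm{\vartheta})=\bm{\rho}$ and $\bm{z}\circ\bm{\rho}=\bm{\rho}$. Your explicit treatment of how large $M$ must be (via the pseudoinverse solution and finiteness of $\mathbb{B}^{N_e}$) is actually more careful than the paper, which simply assumes $M$ sufficiently large without quantification.
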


\begin{proof}
  We first prove that $\bm{L}_{\mathcal{G}_{\bm{z}}} \bm{\vartheta} \!=\! \bm{c} $ and (\ref{eq-0-1-1}) are equivalent.

  First of all, we have $\bm{L}_{\mathcal{G}_{\bm{z}}} \bm{\vartheta} \!=\! \bm{c} $ $\Leftrightarrow$ $\bm{E}_{\mathcal{G}_{\bm{z}}} \bm{E}_{\mathcal{G}_{\bm{z}}}^T \bm{\vartheta} = \bm{c} $ $\Leftrightarrow$  $\{ \bm{E}_{\mathcal{G}_{\bm{z}}}^T \bm{\vartheta} = \bm{\rho}, \bm{E}_{\mathcal{G}_{\bm{z}}} \bm{\rho} = \bm{c} \}$ $\Leftrightarrow$  
  \begin{subequations}\label{eq-0-1-2} 
    \begin{align}
      & \bm{z} \circ ( \bm{E}_{\mathcal{G}}^T \bm{\vartheta}) = \bm{\rho} \label{eq-0-1-2:0} \\ 
      & \bm{E}_{\mathcal{G}} (\bm{z} \circ \bm{\rho} ) = \bm{c}. \label{eq-0-1-2:1}
    \end{align}
  \end{subequations}

  Regarding the left-hand side of (\ref{eq-0-1-2:0}) as bilinear terms of $\bm{z}$ and $\bm{E}_{\mathcal{G}}^T \bm{\vartheta}$, (\ref{eq-0-1-1:0}) and (\ref{eq-0-1-1:2}) are just the McCormick envelopes of (\ref{eq-0-1-2:0}), which are exact since $\bm{z} \in \mathbb{B}^{N_e}$. For bilinear terms $\bm{z} \circ \bm{\rho} $ in (\ref{eq-0-1-2:1}), we have $\bm{z} \circ \bm{\rho} = \bm{\rho}$ deriving from their exact McCormick envelopes with upper and lower bounds of $\bm{\rho}$ given by (\ref{eq-0-1-1:2}). Thus, (\ref{eq-0-1-2:1}) and (\ref{eq-0-1-1:3}) are equivalent if (\ref{eq-0-1-1:2}) holds. Then we conclude equivalence between $\bm{L}_{\mathcal{G}_{\bm{z}}} \bm{\vartheta} \!=\! \bm{c} $ and (\ref{eq-0-1-1}), which together with Lemma \ref{lem-0-1-1} gives Theorem \ref{theo-0-1-1}.
\end{proof}

 
According to Theorem \ref{theo-0-1-1}, by adding (\ref{eq-0-1-1}) with auxiliary variables $\bm{\vartheta} \!\in\! \mathbb{R}^{N_n}$ and $\bm{\rho} \!\in\! \mathbb{R}^{N_{\!e}} $ to DC OTS model (\ref{eq-0-1-0}), network connectedness will be guaranteed. In fact, constraints (\ref{eq-0-1-1}) is applicable to any optimization models of OTS including AC OTS \cite{4-61}. \hig{For the $N-1$ security-constrained OTS model with (\ref{eq-0-0-add-1}), (\ref{eq-0-1-1}) with $\bm{z}$ substituted by $\bm{z} \circ \bm{s}^{\kappa}$ should be introduced for each $\kappa \!\in\! \mathcal{C}$.} 
Notably, constraints (\ref{eq-0-1-1}) are all linear and without new binary variables being introduced. Thus constraints (\ref{eq-0-1-1}) will not change the type of original optimization problems. For the setting of $\bm{c}$ required to be uniquely-balanced, a simple way is letting any one $\bm{c}_i$ be $1 \!-\! N_n$ and the others be $1$.

 \begin{remark}\label{remark-0-1-1}
  Constraints (\ref{eq-0-1-1}) are similar to the set of constraints (\ref{eq-0-1-0:5})-(\ref{eq-0-1-0:4}) in the DC OTS model, but where $\bm{c}$ is to be optimized and $M$ in (\ref{eq-0-1-1:2}) is replaced by terms related to branch capacity.
\end{remark}

 \begin{remark}
   The proposed connectedness constraints can be easily extended to more general cases (e.g., interconnected microgrids) where power grids are also allowed to operate as certain multiple isolated sub-networks. In this case, we only need to replace the uniquely-balanced $\bm{c}$ by a multiply-balanced $\bm{c}$ with $\{ \mathcal{V}_i |_{i=2}^{N_m} \}$ corresponding to all node sets of allowable sub-networks.
 \end{remark}

\subsection{Reduction of the Connectedness Constraints}

We further reduce constraints (\ref{eq-0-1-1}) to improve computational efficiency. Firstly, according to Remark \ref{remark-0-1-1}, connectedness of some subgraphs of $\mathcal{G}_{\bm{z}}$ can be already ensured by (\ref{eq-0-1-0}) and thus corresponding constraints in (\ref{eq-0-1-1}) become redundant. To exploit this for the reduction, denote by $\mathcal{G}_{s}(\mathcal{V}_{s}, \mathcal{E}_{s} \!) $ an arbitrary connected NIS of $\mathcal{G}(\mathcal{V}, \mathcal{E})$. Let $\bm{p}_{ds} \!\!=\!\! [p_{d,i}]_{i \in \mathcal{V}_s}$, $\bm{p}_{gs} \!\!=\!\! [p_{g,i}]_{i \in \mathcal{V}_s}$, $\bm{p}_{gs}^{\min} \!\!=\!\! [p_{g,i}^{\min}]_{i \in \mathcal{V}_{\!s}}$, and $\bm{p}_{gs}^{\max} \!\!=\!\! [p_{g,i}^{\max}]_{i \in \mathcal{V}_s}$. Then the unbalanced connected NIS is defined by Definition \ref{def-0-1-partition}.
\begin{definition}[Unbalanced connected NIS]\label{def-0-1-partition}
  If $\bm{\varrho} \!=\! \bm{p}_{gs} \!-\! \bm{p}_{ds} $ is unbalanced for any $\bm{p}_{gs}$ satisfying $\bm{p}_{gs}^{\min} \!\leq\! \bm{p}_{gs} \!\leq\! \bm{p}_{gs}^{\max}$, then $\mathcal{G}_{s}( \mathcal{V}_{s}, \mathcal{E}_{s} ) $ is called an unbalanced connected NIS of $\mathcal{G}(\mathcal{V}, \mathcal{E})$.
\end{definition}

Next, let $\mathcal{E}_{\!c}$ be the set of edges connecting $\mathcal{G}_{\!s}$ and $\mathcal{G}_{\!m}(\! \mathcal{V}_{\!m}, \mathcal{E}_{\!m} \!) $ that denotes the NIS of $\mathcal{G}$ by nodes $\mathcal{V}_{\!m} \!\!=\!\! \mathcal{V}/\mathcal{V}_{\!s}$. Denote the set of endpoints of $\mathcal{E}_{\!c}$ in $\mathcal{V}_{\!s}$ by $\mathcal{V}_{\!cs}$ (called them \textit{boundary nodes} of the unbalanced connected NIS $\mathcal{G}_s$), and that in $\mathcal{V}_m$ by $\mathcal{V}_{cm}$. Then regarding graph $\mathcal{G}_m$ as a node denoted as $v_m$, $\mathcal{E}_c$ is mapped onto $\mathcal{E}_{cm}$ that replaces all endpoints in $\mathcal{V}_{cm}$ by $v_m$, so that we obtain graph $\mathcal{G}_{\!sm} (\mathcal{V}_{\!s} \cup \{v_m\}, \mathcal{E}_s \cup \mathcal{E}_{\!cm} ) $. Introduce graph $\mathcal{G}_{ms} (\mathcal{V}_m \cup \mathcal{V}_{cs}, \mathcal{E}_m \cup \mathcal{E}_{cs} ) $, where $\mathcal{E}_{cs} \!\subseteq\! \mathcal{V}_{cs} \!\times\! \mathcal{V}_{cs}$ is arbitrarily assigned as long as the NIS of $\mathcal{G}_{ms}$ by nodes $\mathcal{V}_{cs}$ is connected. Analogously to $\mathcal{G}_{\bm{z}}$, we use $\mathcal{G}_{*, \bm{z}}$ with $* \!\in\! \{s,m,sm,ms\}$ to denote the edge-induced subgraph of $\mathcal{G}_{*, \bm{z}}$ by edges $\{ k \!\in\! \mathcal{E}_* | z_k \!\!=\!\! 1 \}$ with $\mathcal{E}_{*}$ being the edge set of graph $\mathcal{G}_{*}$. Fig. \ref{fig-0-1-4} illustrates the graph notations introduced above.

\begin{figure}[h]
	\centering
	\includegraphics[width=7.5cm]{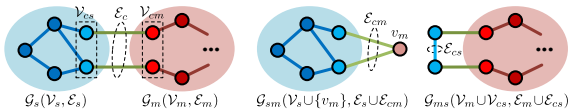}
	\caption{Illustration of some graph notations.}
	\label{fig-0-1-4}
\end{figure}

\begin{lemma}\label{lemma-0-1-con}
  Graph $\mathcal{G}_{\bm{z}}$ is connected iff $\mathcal{G}_{sm, \bm{z}}$ and $\mathcal{G}_{ms, \bm{z}}$ are both connected.
\end{lemma}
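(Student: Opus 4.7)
My plan is to prove the biconditional by translating walks between $\mathcal{G}_{\bm{z}}$, $\mathcal{G}_{sm,\bm{z}}$ and $\mathcal{G}_{ms,\bm{z}}$, using the fact that $v_m$ in $\mathcal{G}_{sm}$ is a vertex contraction of the whole set $\mathcal{V}_m$, and that the auxiliary set $\mathcal{E}_{cs}$ is chosen precisely so as to keep $\mathcal{V}_{cs}$ connected inside $\mathcal{G}_{ms}$.

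For the forward direction, assume $\mathcal{G}_{\bm{z}}$ is connected. The graph $\mathcal{G}_{sm,\bm{z}}$ is obtained from $\mathcal{G}_{\bm{z}}$ by identifying every vertex of $\mathcal{V}_m$ with $v_m$ and dropping the resulting self-loops coming from active $\mathcal{E}_m$ edges; since such a vertex identification is a graph homomorphism, it preserves connectedness, and so $\mathcal{G}_{sm,\bm{z}}$ is connected. For $\mathcal{G}_{ms,\bm{z}}$, I take any $u,w \in \mathcal{V}_m \cup \mathcal{V}_{cs}$ and a walk $P$ between them in $\mathcal{G}_{\bm{z}}$, cut $P$ at the entries into and exits from $\mathcal{V}_s \setminus \mathcal{V}_{cs}$, and replace each such excursion, which is sandwiched between two boundary nodes $a,b \in \mathcal{V}_{cs}$, by a walk from $a$ to $b$ using only edges of $\mathcal{E}_{cs}$; such a walk exists by the defining property of $\mathcal{E}_{cs}$. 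The concatenated walk lies entirely in $\mathcal{G}_{ms,\bm{z}}$.

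For the backward direction I argue by contrapositive. If $\mathcal{G}_{\bm{z}}$ is disconnected, pick a bipartition $\mathcal{V} = A \sqcup B$ with no active edge across it. If $A \subseteq \mathcal{V}_s$, then $A$ has no active $\mathcal{E}_s$ edge to $B \cap \mathcal{V}_s$ and no active cross edge into $\mathcal{V}_m$, so inside $\mathcal{G}_{sm,\bm{z}}$ the set $A$ is cut off from $\{v_m\} \cup (B \cap \mathcal{V}_s)$, witnessing disconnectedness of $\mathcal{G}_{sm,\bm{z}}$. Symmetrically, if $A \subseteq \mathcal{V}_m$, then the absence of active $\mathcal{E}_m$ and $\mathcal{E}_c$ edges out of $A$, together with the fact that $\mathcal{E}_{cs} \subseteq \mathcal{V}_{cs} \times \mathcal{V}_{cs}$ cannot touch $A$, isolates $A$ inside $\mathcal{G}_{ms,\bm{z}}$. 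A mixed bipartition in which both $A$ and $B$ meet $\mathcal{V}_s$ as well as $\mathcal{V}_m$ is reduced to one of these two cases by restricting to $\mathcal{V}_s \cup \{v_m\}$ inside $\mathcal{G}_{sm,\bm{z}}$ or to $\mathcal{V}_m \cup \mathcal{V}_{cs}$ inside $\mathcal{G}_{ms,\bm{z}}$.

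The main subtlety I anticipate is precisely this mixed case: both the contraction into $v_m$ inside $\mathcal{G}_{sm,\bm{z}}$ and the dummies in $\mathcal{E}_{cs}$ inside $\mathcal{G}_{ms,\bm{z}}$ can, a priori, bridge parts of $A$ with parts of $B$ that are genuinely separated in $\mathcal{G}_{\bm{z}}$. Controlling this requires that every shortcut $\mathcal{E}_{cs}$ introduces between two boundary nodes is realized by an actual walk within $\mathcal{G}_{sm,\bm{z}}$, so that any disconnection of $\mathcal{G}_{\bm{z}}$ must surface in at least one of the two reduced graphs; this is the step where the specific construction of $\mathcal{E}_{cs}$ relative to the internal structure of $\mathcal{G}_s$ must enter the argument.
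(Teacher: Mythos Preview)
The paper disposes of this lemma in a single sentence (``The result follows directly from the definition of connected graphs''), so your walk-translation argument is already far more careful than the original. Your forward direction is correct: contraction of $\mathcal{V}_m$ to $v_m$ preserves connectedness, and any $\mathcal{G}_{\bm{z}}$-walk between vertices of $\mathcal{V}_m\cup\mathcal{V}_{cs}$ can be rerouted into $\mathcal{G}_{ms,\bm{z}}$ by replacing each maximal $\mathcal{E}_s$-segment (whose endpoints are necessarily boundary nodes) with a path in the always-on edges $\mathcal{E}_{cs}$.

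The backward direction, however, has a real gap precisely at your flagged ``mixed case,'' and it cannot be closed under the stated hypotheses. Take $\mathcal{V}_s=\{1,2\}$, $\mathcal{V}_m=\{3,4\}$, edges $(1,2),(3,4),(1,3),(2,4)$ in $\mathcal{G}$, and let $\bm{z}$ switch off $(1,2)$ and $(3,4)$. Then $\mathcal{G}_{\bm{z}}$ splits into $\{1,3\}$ and $\{2,4\}$, yet $\mathcal{G}_{sm,\bm{z}}$ is connected through $v_m$ (active edges $(1,v_m),(2,v_m)$) and $\mathcal{G}_{ms,\bm{z}}$ is connected through the always-on dummy edge $(1,2)\in\mathcal{E}_{cs}$ (path $3$--$1$--$2$--$4$). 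Your hope that ``the specific construction of $\mathcal{E}_{cs}$ relative to the internal structure of $\mathcal{G}_s$ must enter the argument'' does not help here, since $\mathcal{E}_{cs}$ is chosen only to make the boundary set $\mathcal{V}_{cs}$ connected and is entirely independent of $\bm{z}$. The lemma as literally stated therefore seems to require an extra hypothesis---for instance that $\mathcal{G}_{s,\bm{z}}$ itself be connected---and the paper's one-line proof does not address this.
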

\begin{proof}
  The result follows directly from the definition of connected graphs. 
\end{proof}

\begin{theorem}\label{therom-0-1-2}
  Suppose $\bm{z}^{\star}$ is any feasible $\bm{z}$ for OTS model (\ref{eq-0-1-0}) and $\mathcal{G}_s(\mathcal{V}_s, \mathcal{V}_s)$ is an unbalanced connected NIS of $\mathcal{G}(\mathcal{V}, \mathcal{E})$. Then $\mathcal{G}_{\bm{z}^{\star}}$ is connected iff $\mathcal{G}_{ms, \bm{z}^{\star}}$ is connected.
\end{theorem}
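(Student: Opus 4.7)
The plan is to leverage Lemma~\ref{lemma-0-1-con} and reduce the theorem to showing that, given the OTS feasibility of $\bm{z}^{\star}$ and the unbalanced property of $\mathcal{G}_s$, the auxiliary graph $\mathcal{G}_{sm, \bm{z}^{\star}}$ is automatically connected as soon as $\mathcal{G}_{ms, \bm{z}^{\star}}$ is. The ``only if'' direction is then immediate from Lemma~\ref{lemma-0-1-con}: if $\mathcal{G}_{\bm{z}^{\star}}$ is connected, then in particular $\mathcal{G}_{ms, \bm{z}^{\star}}$ is connected. So the real content lies in the ``if'' direction.

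For the ``if'' direction I would argue by contradiction. Assume $\mathcal{G}_{ms, \bm{z}^{\star}}$ is connected but $\mathcal{G}_{sm, \bm{z}^{\star}}$ is not. Then $\mathcal{G}_{sm, \bm{z}^{\star}}$ contains a connected component whose vertex set $\mathcal{V}_s'$ is disjoint from $\{v_m\}$ and therefore lies entirely in $\mathcal{V}_s$. Because this component is connected through edges of $\mathcal{E}_s$ with $z_k^{\star}=1$, the set $\mathcal{V}_s'$ indexes a connected node-induced subgraph of $\mathcal{G}_s$; moreover every branch of $\mathcal{G}$ with exactly one endpoint in $\mathcal{V}_s'$ must be switched off, since otherwise the component would either extend further within $\mathcal{V}_s$ through a branch of $\mathcal{E}_s$ or reach $v_m$ through a branch of $\mathcal{E}_c$ mapped to $\mathcal{E}_{cm}$, contradicting its status as a maximal component not containing $v_m$.

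I would then sum the nodal power balance~(\ref{eq-0-1-0:4}) over $i \in \mathcal{V}_s'$. Internal branch flows cancel telescopically, while every cross-boundary flow $p_{b,k}$ vanishes because its status is $z_k^{\star}=0$ and~(\ref{eq-0-1-0:5}) pins $p_{b,k}=0$. What remains is $\sum_{i \in \mathcal{V}_s'} (p_{g,i} - p_{d,i}) = 0$, meaning that the subvector of $\bm{p}_{gs} - \bm{p}_{ds}$ obtained from the OTS-feasible $(\bm{z}^{\star}, \bm{p}_g)$ is balanced on the connected NIS indexed by $\mathcal{V}_s'$. This contradicts Definition~\ref{def-0-1-partition}: $\mathcal{G}_s$ was assumed to be an \emph{unbalanced} connected NIS, and the $\bm{p}_{gs}$ component of any OTS-feasible dispatch necessarily lies within the bounds $[\bm{p}_{gs}^{\min}, \bm{p}_{gs}^{\max}]$ by~(\ref{eq-0-1-0:2}). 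Hence $\mathcal{G}_{sm, \bm{z}^{\star}}$ must be connected, and Lemma~\ref{lemma-0-1-con} then delivers connectedness of $\mathcal{G}_{\bm{z}^{\star}}$.

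The main obstacle I anticipate is the bookkeeping around the graph contractions, rather than anything deep: one has to verify cleanly that a connected component of $\mathcal{G}_{sm, \bm{z}^{\star}}$ avoiding $v_m$ genuinely corresponds to a connected NIS of $\mathcal{G}_s$ to which Definition~\ref{def-0-1-partition} applies, and that every branch with exactly one endpoint in $\mathcal{V}_s'$ — whether originally in $\mathcal{E}_s$ or in the cut $\mathcal{E}_c$ — is certifiably switched off, so that the boundary terms in the power-balance sum really do vanish. Once this is in place, the cancellation and the appeal to the unbalanced assumption are routine.
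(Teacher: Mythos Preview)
Your argument is correct. Both the paper and you reduce the statement, via Lemma~\ref{lemma-0-1-con}, to showing that OTS feasibility together with the unbalanced property forces $\mathcal{G}_{sm,\bm{z}^{\star}}$ to be connected; once that is established, the ``iff'' collapses to connectedness of $\mathcal{G}_{ms,\bm{z}^{\star}}$ alone. Where the two proofs diverge is in how they certify connectedness of $\mathcal{G}_{sm,\bm{z}^{\star}}$. The paper takes an algebraic, constructive route: it rewrites the DC power flow over $\mathcal{G}_s$ as a weighted Laplacian equation, shifts all boundary angles in $\mathcal{V}_{cm}$ to a common scalar $\theta_m^{\star}$ while absorbing the discrepancy into modified edge weights $\bm{B}_c'$, and thereby exhibits a solution of $\bm{L}_{\mathcal{G}_{sm,\bm{z}^{\star}}}\,[\bm{\theta}_s^{\star T}\ \theta_m^{\star}]^T=[\bm{\varrho}^T\ -\bm{1}^T\bm{\varrho}]^T$ with a uniquely-balanced right-hand side, so that Lemma~\ref{lem-0-1-1} yields connectedness. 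You instead argue combinatorially by contradiction: isolate a component of $\mathcal{G}_{sm,\bm{z}^{\star}}$ missing $v_m$, observe that every branch crossing its boundary is switched off, and sum the nodal balances~(\ref{eq-0-1-0:4}) to force $\sum_{i\in\mathcal{V}_s'}(p_{g,i}-p_{d,i})=0$, contradicting Definition~\ref{def-0-1-partition}. Your route is more elementary---it dispenses with Lemma~\ref{lem-0-1-1} and with the weight-modification step---while the paper's route keeps the argument inside the electrical-flow framework that motivates the whole construction. The bookkeeping you flag (that the isolated component really is a connected NIS of $\mathcal{G}_s$, and that every boundary branch, whether in $\mathcal{E}_s$ or in $\mathcal{E}_c$, is certifiably off) is straightforward and your sketch handles it correctly.
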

\begin{proof} 
  Since $\bm{z}^{\star}$ is feasible for (\ref{eq-0-1-0}), by (\ref{eq-0-1-0:6})-(\ref{eq-0-1-0:4}) corresponding to $\mathcal{G}_s$, $\exists \bm{\varrho} \!\in\!  \{  \bm{p}_{gs} \!-\! \bm{p}_{ds}  | \bm{p}_{gs}^{\min} \!\leq\! \bm{p}_{\!gs} \!\leq\! \bm{p}_{\!gs}^{\max} \} $, solutions of the following equation exist:  
  \begin{equation}\label{eq-0-1-proof-1} 
    \bm{L}_{\mathcal{G}_{s, \bm{z}^{\star} }} \bm{\theta}_s = \bm{\varrho} + \bm{E}_{cs} \bm{Z}_{c}^{\star} \bm{B}_c (\bm{E}_{cs}^T \bm{\theta}_{s} - \bm{E}_{cm}^T \bm{\theta}_{cm})
  \end{equation} 
  where $\bm{L}_{\mathcal{G}_{s, \bm{z}^{\star} }}$ is the Laplacian matrix of $\mathcal{G}_{s, \bm{z}^{\star} }$ with $b_k$ being edge weights, $\bm{\theta}_{s} \!\!=\!\! [\theta_i]_{i \in \mathcal{V}_s}$, $\bm{\theta}_{cm} \!\!=\!\! [\theta_i]_{i \in \mathcal{V}_{cm}}$, $\bm{E}_{cs} \!\in\! \mathbb{R}^{|\mathcal{V}_{s}| \!\times\! |\mathcal{E}_{c}|}$ is the incidence matrix between $\mathcal{V}_s$ and $\mathcal{E}_c$, $\bm{E}_{cm} \!\in\! \mathbb{R}^{|\mathcal{V}_{cm}| \!\times\! |\mathcal{E}_{c}|}$ is the incidence matrix between $\mathcal{V}_{cm}$ and $\mathcal{E}_c$, $\bm{B}_c \!\!=\!\! \diag( b_k |_{k \in \mathcal{E}_c} )$, and $\bm{Z}_c^{\star} \!\!=\!\! \diag( z_k^{\star} |_{k \in \mathcal{E}_c} )$. 
  Denote by $(\bm{\theta}_s^{\star}, \bm{\theta}_{cm}^{\star})$ a solution of (\ref{eq-0-1-proof-1}). 
  Since $\mathcal{G}_s$ is an unbalanced connected NIS of $\mathcal{G}$, $\bm{1}_{|\mathcal{V}_s|}^T \bm{\varrho} \!\neq\! 0$. 
  Further with $ \bm{1}_{|\mathcal{V}_s|}^T\! \bm{L}_{\mathcal{G}_{s, \bm{z}^{\star} }}  \!\!=\!\! \bm{0}$, multiplying both sides of (\ref{eq-0-1-proof-1}) at solution $(\bm{\theta}_s^{\star}, \bm{\theta}_{cm}^{\star})$ by $\bm{1}_{|\mathcal{V}_s|}^T$ yields 
  \begin{equation}\label{eq-0-1-proof-3}
    \bm{1}_{|\mathcal{V}_s|}^T  \bm{E}_{cs} \bm{Z}_{c}^{\star} \bm{B}_c (\bm{E}_{cs}^T \bm{\theta}_{s}^{\star} - \bm{E}_{cm}^T \bm{\theta}_{cm}^{\star}) \!\!=\!\! -\! \bm{1}_{|\mathcal{V}_s|}^T \bm{\varrho} \!\neq\! 0 
  \end{equation} 
  Next, all entries of $\bm{\theta}_{cm}^{\star}$ are shifted to a same scaler $\theta_m^{\star} $ and $\bm{B}_c$ is transformed into $\bm{B}_c' \!\!=\!\! \diag( b_k' |_{k \in \mathcal{E}_c} )$ such that 
  \vspace{-2pt}
  \begin{equation}\label{eq-0-1-proof-2}
    \vspace{-2pt}
    \bm{B}_c (\bm{E}_{cs}^T \bm{\theta}_{s}^{\star} \!-\! \bm{E}_{cm}^T \bm{\theta}_{cm}^{\star}) = \bm{B}_c' (\bm{E}_{cs}^T \bm{\theta}_{s}^{\star} \!-\! \bm{E}_{cm}^T \bm{1}_{|\mathcal{V}_{cm}|} \theta_m^{\star} )  
  \end{equation}
  Then (\ref{eq-0-1-proof-1}) with equalities (\ref{eq-0-1-proof-3}) and (\ref{eq-0-1-proof-2}) gives
  \vspace{-2pt}
  \begin{equation}\label{eq-0-1-proof-4}
    \vspace{-2pt}
    \bm{L}_{\mathcal{G}_{sm, \bm{z}^{\star} }} [ {\bm{\theta}_s^{\star}}^T ~ \theta_m^{\star} ]^T
     = [ \bm{\varrho}^T ~  -\! \bm{1}_{|\mathcal{V}_s|}^T \bm{\varrho}]^T
  \end{equation}
  where $\bm{L}_{\mathcal{G}_{sm, \bm{z}^{\star} }}$ is the Laplacian matrix of $\mathcal{G}_{sm, \bm{z}^{\star} }$ with $b_k$ being weights for edges $\mathcal{E}_s$ and $b_k'$ being weights for edges $\mathcal{E}_c$. 
  Vector $[ \bm{\varrho}^T   -\! \bm{1}_{|\mathcal{V}_s|}^T \bm{\varrho}]^T$ is uniquely-balanced since $\bm{\varrho}$ is unbalanced. By Lemma \ref{lem-0-1-1} and (\ref{eq-0-1-proof-4}), graph $\mathcal{G}_{sm, \bm{z}^{\star}}$ is connected, which together with Lemma \ref{lemma-0-1-con} gives Theorem \ref{therom-0-1-2}.
\end{proof}

\vspace{-4pt}

By Theorem \ref{therom-0-1-2}, the first step for reduction of the connectedness constraints is 
\begin{itemize}
  \item[\textit{S.1:}] Find all unbalanced connected NISs of $\mathcal{G}(\!\mathcal{V},\! \mathcal{E}\!)$ and corresponding boundary node sets, denoted as $\{ \tilde{\mathcal{G}}_{\!s} (\! \tilde{\mathcal{V}}_{\!s}, \tilde{\mathcal{E}}_{\!s} \!) |_{s=1}^{N_u} \}$ and $\{ \mathcal{V}_{b, s} |_{s=1}^{N_u} \}$, respectively. Construct graph $\mathcal{G}_{o}(\mathcal{V}_o, \mathcal{E}_o )$, with $\mathcal{V}_o \!\!=\!\! (\mathcal{V} \!/ \cup_{s=1}^{N_{\!u}} \tilde{\mathcal{V}}_s) \cup (\cup_{s=1}^{N_{\!u}} \mathcal{V}_{b,s})  $, and $\mathcal{E}_o \!\!=\!\! (\mathcal{E} \!/ \cup_{s=1}^{N_u} \tilde{\mathcal{E}}_s) \cup (\cup_{s=1}^{N_u} \mathcal{E}_{b,s})  $, where $\mathcal{E}_{b,s} \!\subseteq\! \mathcal{V}_{b,s} \!\times\! \mathcal{V}_{b,s} $ is arbitrarily assigned as long as $\{ \mathcal{G}_{b,s}(\mathcal{V}_{b,s}, \mathcal{V}_{b,s} ) |_{s=1}^{N_u} \}$ are all connected. 
\end{itemize}

The next stage of the reduction is based on the fact that branches collected by $\mathcal{E}_u$ are unswitchable in OTS, i.e., partial entries of $\bm{z}$ are fixed to 1, which consist of three steps:
\begin{itemize}
  \item[\textit{S.2:}] Find all connected components of graph $\mathcal{G}_{\!u}(\! \mathcal{V}_o,\! (\!\mathcal{E}_u \!\cap\! \mathcal{E}_o) \!\cup\! (\cup_{\!s=1}^{\!N_u} \mathcal{E}_{\!b,s}) )$, denoted by $\mathcal{N} \!\!\!=\!\! \{\!\mathcal{G}_{\!u, k}\!(\!\mathcal{V}_{\!k},\! \mathcal{E}_k ) \!|_{\!k=1}^{N_c} \!\}$, and the set of edges connecting different components, denoted by $\mathcal{E}_l$.
  \item[\textit{S.3:}] Construct graphs $\{\mathcal{G}'_{u, k}(\mathcal{V}'_k, \mathcal{E}'_k ) |_{k=1}^{N_c} \} $ where $\mathcal{V}'_k \!\subseteq\! \mathcal{V}_k $ containing nodes being involved in $\mathcal{E}_{l}$, and $\mathcal{E}'_k$ is arbitrarily assigned as long as $\mathcal{G}'_{u, k}$ is connected.
  \item[\textit{S.4:}] Construct graph $\mathcal{G}'(\mathcal{V}', \mathcal{E}')$ with $\mathcal{V}' \!=\! \bigcup_{k=1}^{N_c} \mathcal{V}'_k $ and $\mathcal{E}' = \mathcal{E}_l \cup \bigcup_{k=1}^{N_c} \mathcal{E}'_k $.
\end{itemize}
\begin{figure}[h]
    \centering
	\includegraphics[width=8cm]{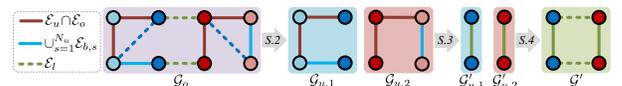}
	\caption{Illustration of steps for reduction of the connectedness constraints.}
	\label{fig-0-1-1}
\end{figure}
\textit{S.2}-\textit{S.4} are illustrated in Fig. \ref{fig-0-1-1}. Finally, since for any feasible $\bm{z}$ for (\ref{eq-0-1-0}), connectedness of the induced graphs from $\mathcal{G}$, $\mathcal{G}_o$ and $\mathcal{G}'$ is equivalent, the reduction ends as follows:
\begin{itemize}
  \item[\textit{S.5:}] Introduce vector $\bm{z}' \in \mathbb{B}^{|\mathcal{E}'|}$ to represent status of branches in $\mathcal{E}'$, whose entries equal to 1 for $\bigcup_{k=1}^{N_c} \! \mathcal{E}'_k$ and equal to corresponding entries in $\bm{z}$ for $\mathcal{E}_l$; 
  \item[\textit{S.6:}] The reduced version of (\ref{eq-0-1-1}) is obtained by replacing $\bm{E}_{\mathcal{G}}, \bm{\vartheta}, \bm{\rho}, \bm{1}_{N_e}$ and $\bm{c} $ in (\ref{eq-0-1-1}) by their counterparts for graph $\mathcal{G}'$, and $\bm{z}$ by $\bm{z}'$.
\end{itemize}

\section{Case Study}

The proposed connectedness constraints are demonstrated by adding them to model (\ref{eq-0-1-0}) \hig{and the $N\!-\!1$ security-constrained DC OTS model} both with $f$ being the total generation cost for two systems: IEEE 30-bus system and German transmission network in SciGRID \cite{4-1125}. We use M1 to M4 referring to the original OTS model (\ref{eq-0-1-0}), (\ref{eq-0-1-0}) with (\ref{eq-0-1-nc}), (\ref{eq-0-1-1}) and reduced (\ref{eq-0-1-1}) added, respectively; \hig{and N1 to N4 referring to the $N\!-\!1$ security-constrained version of M1 to M4, respectively.} For each system, $\lceil\! \alpha N_{\!e} \!\rceil$ lines are assumed to be switchable with $\alpha \!\in\! \{0.3 \!+\! 0.1 k|_{k=0}^4  \}$, where $\lceil \cdot \rceil$ represents ceiling, \hig{and $\mathcal{C}$ contains every branch in $\mathcal{E}$ whose outage causes no islands of $\mathcal{G}(\mathcal{E}, \mathcal{V})$.} For the IEEE 30-bus system, 100 different configurations of switchable lines are generated randomly; and for the German transmission network, 100 sample points from the time-series data are used while the configuration of switchable lines is fixed. Gurobi 9.0 is used to solve MILPs with default solver parameters on a Linux 64-Bit PC with an Intel(R) Core(TM) i5-6500 CPU $\!$@$\!$ 3.20GHz and 16GB RAM \hig{for M1 to M4, and on a Linux 64-Bit server with 2 Intel(R) Xeon(R) CPUs E5-2640 v4 $\!\!$@$\!\!$ 2.40GHz and 125GB RAM for N1 to N4.}


\begin{table}[H]
	\centering
  \caption{\hig{Number of Connected Optimal Topologies.}}
  \vspace{-10pt}
  \hig{
  {\footnotesize{
    \begin{tabular*}{\hsize}{@{}p{0.5cm}  @{}p{0.523cm}@{}p{0.523cm}@{}p{0.523cm}@{}p{0.523cm}@{}p{0.523cm}@{}p{0.523cm}@{}p{0.523cm}@{}p{0.523cm}@{}p{0.523cm}@{}p{0.523cm}@{}p{0.523cm}@{}p{0.523cm}@{}p{0.523cm}@{}p{0.523cm}@{}p{0.523cm}@{}p{0.3cm}    }
      \toprule
      \multirow{2}{*}{$\alpha$} & \multicolumn{8}{c}{IEEE 30-bus system} & \multicolumn{8}{c}{German transmission network} \\ \arrayrulecolor{darkgray}\cline{2-9}  \arrayrulecolor{lightgray}\cline{10-17}
       \multicolumn{1}{c}{}  & M1 & N1 &  M2 & N2 &  M3 & N3 &  M4 & N4 &  M1 & N1 &  M2 & N2 &  M3 & N3 &  M4 & N4 \\  [-1.5pt] \arrayrulecolor{black}\midrule 
       0.3 & 65 & 24 & 98  & 53  & 100 & 100  & 100 & 100  & 10 & 14  & 67 & 28 & 100 & 100 & 100 & 100    \\     
       0.4 & 73 & 35 & 97  & 46  & 100 & 100  & 100 & 100  & 7  & 9   & 58 & 20 & 100 & 100 & 100 & 100    \\ 
       0.5 & 72 & 27 & 97  & 55  & 100 & 100  & 100 & 100  & 6  & 9   & 60 & 19 & 100 & 100 & 100 & 100   \\
       0.6 & 73 & 18 & 99  & 43  & 100 & 100  & 100 & 100  & 7  & 5   & 54 & 25 & 100 & 100 & 100 & 100   \\
       0.7 & 75 & 13 & 100 & 36  & 100 & 100  & 100 & 100  & 5  & 0   & 64 & 17 & 100 & 100 & 100 & 100    \\
       \arrayrulecolor{black}\bottomrule 
      \end{tabular*}
      \scriptsize{\vspace{0mm} \\ Note: For each value of $\alpha$ and OTS model, the maximal number of connected optimal topologies equals to the number of configurations of switchable lines or sample points (i.e., 100). For N1 to N4, the optimal topology is considered connected iff itself and all\\ \vspace{-1.5mm}its post-contingency topologies are connected. \textcolor{white}{text  1 1 1 1 1 1 111 11 1 111  1re.. } 
     }   
  }}
  }
  \vspace{-5pt}
	\label{table-0-1-1}
\end{table}
 
\begin{figure}[H]
	\centering
	\includegraphics[width=8.8cm]{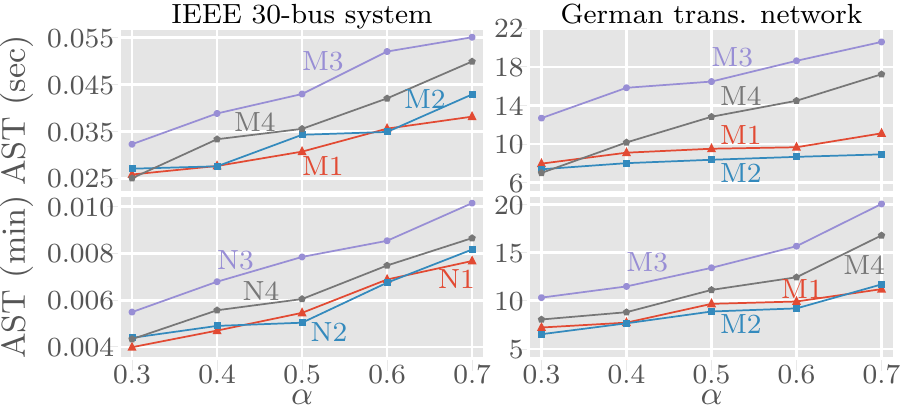}
	\caption{\hig{Average solution time (AST) for M1 to M4 and N1 to N4 with varying $\alpha$. The AST is calculated by averaging the solution time of corresponding 100 line configurations or sample points.} }
	\label{fig-0-1-r3}
\end{figure}
 
Table \ref{table-0-1-1} compares the number of connected optimal topologies obtained by different OTS models with varying $\alpha$. \hig{It is found that for the IEEE 30-bus system without $N\!-\!1$ security,} connectedness of the optimal topology may be ensured without connectedness constraints, and can almost always be ensured with only the necessary constraints. However, for the German transmission network, under all values of $\alpha$, no more than 10\% optimal topologies are connected for M1, and around 40\% obtained optimal topologies are unconnected for M2. \hig{For both systems, ensuring connectedness generally becomes much harder for N1 and N2.} In contrast, with our proposed connectedness constraints or the reduced version, connectedness of the optimal topology is always ensured for both systems. \hig{It shows the necessity of including connectedness constraints into OTS models, especially for large systems with $N\!-\!1$ security constraints and varying operating conditions.}

\hig{Fig. \ref{fig-0-1-r3} gives the average solution time for each OTS model with varying $\alpha$.} It is observed that introducing the proposed connectedness constraints increases solution time compared with that of \hig{M1 and M2 or N1 and N2}, which however, is moderate and can be remarkably relieved by using the reduced version. In particular, for both test system with $\alpha = 0.3$, solution time when using the reduced connectedness constraints is \hig{close to that of M1 and M2 or N1 and N2}. It is worth pointing out that all these observations are obtained under the DC OTS model, while for more complex AC OTS where mixed-integer nonlinear programs are solved, computational burdens caused by the proposed linear constraints should be more moderate.

\section{Conclusion}

In this paper, we filled a long-existing gap of OTS problems: how to strictly ensure network connectedness in an efficient way when solving optimization models of OTS. By adding the proposed set of linear constraints to optimization models of OTS, connectedness of the optimal topology can be strictly ensured while only moderate computational burdens are caused, which can be further relieved by using the proposed reduced version. The proposed approach has the potential to be extended to some other topology optimization and control problems in power grids to impose network topology. The key of the extension should be design $\bm{c}$ to satisfy certain conditions instead of being uniquely-balanced, such that all eligible topologies exactly correspond to the feasible domain of $\bm{z}$ of the proposed linear constraints. 

Future work focuses on further improving computational efficiency of OTS models with strict connectedness constraints. Firstly, the numerical results show that adding constraints (\ref{eq-0-1-nc}) is unable to ensure connectedness strictly while for large-scale systems, can slightly reduce solution time. Thus combining (\ref{eq-0-1-nc}) and a reduced version of (\ref{eq-0-1-1}) can promisingly bring computationally cheaper yet strict connectedness guarantee. Secondly, values of $M$ and $\bm{c}$ also can impact solutions time, which deserves a theoretical analysis to guide their value selection.

\ifCLASSOPTIONcaptionsoff
  \newpage
\fi

\bibliographystyle{IEEEtran}
\bibliography{References/4}

\end{document}